\newtheorem{theorem}{Theorem}
\newtheorem{lemma}[theorem]{Lemma}
\def\Real{{\mathbb R}}
\def\Cmpx{{\mathbb C}}
\def\cnj#1{{\overline{#1}}}
\def\Re{{\textup{Re}}}
\def\Im{{\textup{Im}}}
\def\innerprod(#1,#2){{\left<#1\,,\,#2\right>}}
\def\Set#1{{\left\{#1\right\}}}
\def\qquadtext#1{\qquad\textup{#1}\qquad}
\def\qquadand{\qquadtext{and}}
\def\quadtext#1{\quad\textup{#1}\quad}
\def\quadand{\quadtext{and}}
\def\sind{{s}}
\def\VD{{\boldsymbol D}}
\def\VP{{\boldsymbol P}}
\def\VH{{\boldsymbol H}}
\def\VE{{\boldsymbol E}}
\def\VB{{\boldsymbol B}}
\def\Vx{{\boldsymbol x}}
\def\Vk{{\boldsymbol k}}
\def\Vxhat{\hat{\boldsymbol x}}
\def\Vyhat{\hat{\boldsymbol y}}
\def\Vzhat{\hat{\boldsymbol z}}
\def\Jcurrent{J}
\def\VJ{{\boldsymbol{\Jcurrent}}}
\def\VJtotal{\VJ_{\!\textup{total}}}
\def\VJusual{\VJ_{\!\textup{reg}}}
\def\VJsurface{{\boldsymbol{\mathcal{\Jcurrent}}_{\!\!\textup{surf}}}}
\def\Xafter{{\textup{c}}}
\def\Xbound{{\textup{b}}}
\def\Xbfore{{\textup{a}}}
\def\aafactor{g}
\def\Amp{{\aafactor^{\Xbfore}_{\Xafter}}}
\def\Amm{{\aafactor^{\Xbfore}_{\Xbfore}}}
\def\App{{\aafactor^{\Xafter}_{\Xafter}}}
\def\Apm{{\aafactor^{\Xafter}_{\Xbfore}}}
\def\Amp{{\aafactor^{-}_{+}}}
\def\Amm{{\aafactor^{-}_{-}}}
\def\App{{\aafactor^{+}_{+}}}
\def\Apm{{\aafactor^{+}_{-}}}
\def\tinyzero{{\scalebox{0.6}{0}}}
\def\co{{c_{\tinyzero}}}
\def\cosq{{c_{\tinyzero}^2}}
\def\tzero{{t_{\Xbound}}}
\def\wafter{\omega_{\Xafter}}
\def\wbfore{\omega_{\Xbfore}}
\def\kafter{k_{\Xafter}}
\def\kbfore{k_{\Xbfore}}
\def\nafter{n_{\Xafter}}
\def\nbfore{n_{\Xbfore}}
\def\cafter{c_{\Xafter}}
\def\cc{{c}}
\def\cI{\cc_{\textup{I}}}
\def\cR{\cc_{\textup{R}}}
\def\CC{{C}}
\def\CI{\CC_{\textup{I}}}
\def\CR{\CC_{\textup{R}}}
\def\FTepsilon{{\widetilde{\epsilon}}}
\def\FTn{{\tilde{\nind}}}
\def\nind{{n}}
\def\nconst{{N}}
\def\epsilonII{{\epsilon_{\Xafter}}}
\newcommand{\XDOI}[1]{\href{http://dx.doi.org/#1}{doi:#1}}
\newcommand{\XARXIV}[1]{\href{http://arxiv.org/abs/#1}{arXiv:#1}}
\newcommand{\XWEB}[1]{\href{#1}{#1}}
\def\pCurrent{J}
\def\Kfield{K}
\def\pEfield{E}
\def\pPermittivity{\varepsilon}
\def\pSlave{\Gamma}
\def\pPderivT{\partial_t}
\begin{document}

\title{Temporal boundaries in electromagnetic materials}

\author{Jonathan Gratus$^{1,2}$} 
\homepage[]{https://orcid.org/0000-0003-1597-6084}

\author{Rebecca Seviour$^{3}$}
\homepage[]{https://orcid.org/0000-0001-8728-1463}

\author{Paul Kinsler$^{1,2,4}$}
\homepage[]{https://orcid.org/0000-0001-5744-8146}

\author{Dino A. Jaroszynski$^{5}$}
\homepage[]{https://orcid.org/0000-0002-3006-5492}

\affiliation{$^1$ 
  Department of Physics,
  Lancaster University,
  Lancaster LA1 4YB,
  United Kingdom,
}
\affiliation{$^2$ 
The Cockcroft Institute,
Sci-Tech Daresbury,
Daresbury WA4 4AD,
United Kingdom,
}

\affiliation{$^3$
 University of Huddersfield,
 Huddersfield HD1 1JB,
 United Kingdom,
}

\affiliation{$^4$
  Department of Physics,
  Imperial College London,
  Prince Consort Road,
  London SW7 2AZ,
  United Kingdom.
}

\affiliation{$^5$
Department of Physics,
SUPA and University of Strathclyde,
Glasgow G4 0NG,
United Kingdom.
}

\begin{abstract}
Temporally modulated optical media are important
 in both abstract and applied applications, 
 such as 
 spacetime transformation optics, 
 relativistic laser-plasma interactions,
 and dynamic metamaterials. 
Here we investigate the behaviour of temporal boundaries,
 and show that traditional approaches
 that assume constant dielectric properties,
 with loss incorporated as an imaginary part, 
 necessarily lead to unphysical solutions.
Further, 
 although physically reasonable predictions can be recovered
 with a narrowband approximation, 
 we show that
 appropriate models should
 use materials with a temporal response and dispersive behaviour.
\end{abstract}
\maketitle


\section{Introduction}
\label{ch_Intro}

Mobile technologies are considered
 the biggest technology platform in history, 
 with transformative advances occurring across all society.
These are having a profound impact in diverse areas
 including health care, 
 education, 
 and industry
 \cite{NICR-2016-Connected}. 
Key to developing mobile technologies is the ability 
 to predict electromagnetic (EM) wave propagation though systems
 with differing permittivities, 
 and in particular predicting losses.
In both physics \cite{YuCardona-FSPMP}
 and 
 engineering \cite{RWV-FieldsCommElectr},
 EM loss is often expressed using the electric loss tangent
 (``$\tan \delta$''),
 the ratio of the imaginary to the real part of the permittivity.
This constant permittivity model
 is widely used in condensed matter physics \cite{YuCardona-FSPMP}, 
 and is critical 
 to the design of technologies diverse as
 mobile phones, 
 imaging systems, 
 consumer electronics, 
 radar, 
 accelerators,
 sensors, 
 and even microwave therapy
 \cite{Alabaster-2004phd}.
In such situations, 
 microscopic models
 involving e.g. atomic structure or quantum mechanical effects 
 typically provide no significant advantage.

As new materials are developed for use in
 novel devices, 
 it becomes vital that EM loss and propagation are correctly predicted.  
A very exciting new concept is time dependent media,
 where abrupt changes in permittivity can 
 create
 a \emph{temporal boundary}.
Boundaries play a key role in many physical models;
 they provide initial and final states in dynamical systems,
 constrain analytic solutions in confined systems,
 and represent transitions between different modes of operation.
These concepts 
 date back to the 1950s, 
 when 
 Morgenthaler \cite{Morgenthaler-1958irs}
 showed that a temporal change of the permittivity
 produces both forward and backward propagating waves;
 a result echoed in 
 directional formulations
 for wave propagation
 \cite{Kolesik-M-2004pre,Mizuta-NOY-2005pra,Genty-KKD-2007oe,Kinsler-2018jo-d2owe,Kinsler-2018jpco-fbacou}.
Temporal boundaries \cite{Morgenthaler-1958irs,Xiao-MA-2014ol,Bakunov-M-2014ol,Tan-LZ-2020ol}
  act as time-reversing mirrors
 in acoustics \cite{deRosny-F-2002prl};
 in EM an instantaneous time mirror
 with a sign-change in permittivity
 has been predicted \cite{Kiasat-PEN-2018cleo}
 to cause {field} amplification.
Other examples include dynamically configurable systems
  \cite{Turpin-BMWW-2014ijap,McCall-etal-2018jo-roadmapto},
 spacetime transformation devices
 \cite{McCall-FKB-2011jo,Gratus-KMT-2016njp-stdisp,Kinsler-M-2014adp-scast,Kinsler-M-2014pra},
 time crystals \cite{Sacha-Z-2018rpp,Shapere-W-2012prl-classical,Wilczek-2012prl-quantum},
 ``field patterns'' \cite{Milton-M-2017rspa},
 and in laser driven plasma
 where relativistic changes 
 can lead to a spatial and temporally varying permittivity.

Here we prove that 
 modelling loss
 using a constant complex permittivity and permeability
 \cite{Pendry-2000prl,Dolin-1961ivuzr,Kiasat-PEN-2018cleo,Kinsler-M-2008motl}
 is physically incompatible with a temporal boundary.
Such models can lead to unphysical post-boundary solutions
 that grow exponentially, 
 despite being applied to passive and lossy materials; 
 or fields may become complex-valued despite being real-valued before.
This important point,
 which provides an unambiguous warning to non-specialists, 
 is not addressed in other recent work 
 on temporal boundaries, 
 which focus on either reflection and refraction 
 inside a medium with parabolic dispersion
 \cite{Zhang-DA-2021josab}, 
 appropriately generalised Kramers Kronig relations \cite{Solis-E-2021prb},
 or the complicated effects resulting
 from a Lorentzian response model \cite{Solis-KE-2021arxiv}.
Our dynamic material model, 
 in contrast to more complicated ones
 \cite{Solis-E-2021prb,Solis-KE-2021arxiv}
 is explicitly designed to provide a minimal example with simple behaviour
 which clearly reveals the basic physical principles
 relating to the treatment of temporal boundaries:
 the necessity of considering the dynamics of the bound current, 
 the requirement for material-property boundary conditions, 
 and the resulting secondary implications for frequency-domain properties
 such as the dispersion relations.

\begin{figure}[t]
\centering
\resizebox{0.95\columnwidth}{!}{\includegraphics{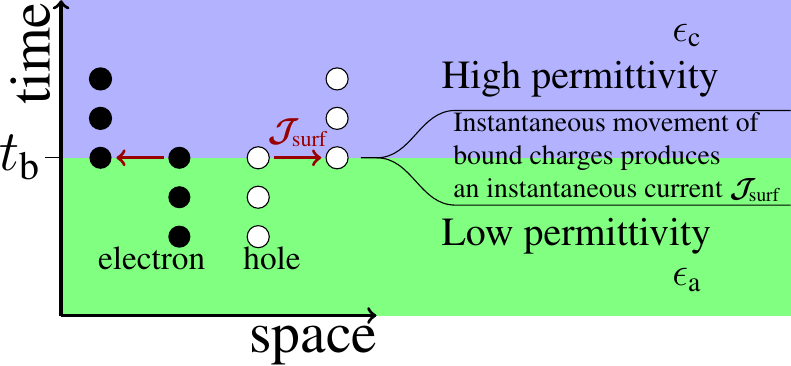}}
\caption{A temporal boundary or transition
 based on {fundamental} TBCs.
Here, 
 a sudden increase of permittivity at $t={{\protect{\tzero}}}$, 
 represented as the movement of bound charges,
 necessarily 
 generates a microscopic ``time-surface'' current $\VJsurface$.
}
\label{fig_TempSurf}
\end{figure}

In this article the term ``boundary'' refers
 to an interface between two regions
 with different constitutive relations (CRs). 
In electromagnetism CRs are most simply given by 
 a permittivity $\epsilon$ and permeability $\mu$,
 although more complicated CRs are also allowed.
In contrast to 
 spatial boundary conditions 
 that describe an interface between two static media, 
 here we consider a temporal boundary at $\tzero$,
 where the medium has one set of CR properties
 before the transition ($t < \tzero$),
 and different CRs afterwards ($t > \tzero$). 
In our idealised transition,
 the CR for the entire region changes instantly;
 although gradual transitions are also possible
 \cite{Chen-GGL-2009mas,Chen-L-2011mas}.
The unphysical consequences arise irrespective 
 of which of the two possible types of 
 temporal boundary condition
 (TBC) we consider.
The first
 ``natural'' TBC \cite{Cheng-C-2005eabe}
 is derived from Maxwell's equations on the assumption
 that all the \emph{currents} are finite,
 which leads to the continuity of $\VD$ and $\VB$. 
The second {``fundamental''} TBC
 treats only $\VE$ and $\VB$ as
 physical EM fields,
 with $\VD$ and $\VH$ as derived fields
 acting as a gauge for the current
 \cite{Gratus-KM-2019ejp-dhfield,Gratus-KM-2019foop-nocharge,Gratus-MK-2020pra-area51}; 
 here $\VE$ and $\VB$ are also continuous, 
 but a \emph{temporal-surface} dipole current appears
 at the transition,
 as shown in Fig. \ref{fig_TempSurf}.

After demonstrating and defining the problem,
 we present two methods for obtaining physically meaningful solutions. 
The first applies the
 constant complex CR model 
 whilst using a narrow band approximation (NBA).
This leads to a solution based on
 complex conjugate pairs of frequencies and refractive indices.
Although partially successful, 
 it
 merely hides the fact that to model lossy media correctly
 when there is a temporal boundary,
 one requires a time-dependent material response,
 and therefore dispersive CR,
 where $\epsilon$ and/or $\mu$
 depend on frequency. 
Besides,
 just as a dynamic medium model requires its own fields 
 to represent its state, 
 in the frequency domain we see that a dispersive medium
 generates one or more additional modes; 
 and this necessary information 
 cannot be included in the standard constant complex CR model.
Thus in either time or frequency, 
 additional boundary conditions (ABCs)
 must be specified.

Note that 
 proofs and additional discussion are presented
 in the Supplementary Material following the References.

\section{Linear Media}
In linear media
 any EM field can be represented in the Fourier domain
 by a sum or integral of terms of the form
 $\exp({-i\omega t + i \Vk\cdot\Vx})$,
 where $\omega\in\Cmpx$ is a complex frequency,
 and
 $\Vk\in\Cmpx^3$ a complex wavevector.
Since the source free Maxwell's equations \eqref{Intro_Maxwell} are linear, 
 with $\VJ_{\!\textup{total}}=0$ and $\rho=0$,
 it is sufficient to consider just a single mode
\begin{align}
E_x(t,z)=E_0\exp({-i\omega t + i k z})
\quadtext{where}
k\,,\omega\in\Cmpx
,
\label{NarrowBA_psi}
\end{align}
 with  $\Vk$ oriented along the $z$-axis (along $\Vzhat$),
 and $\VE$ along $\Vxhat$.

Now consider
 a temporally dispersive medium where the CRs
 specify permittivity $\FTepsilon(-\omega)$ and
 permeability $\mu_{\tinyzero}$,
 and
 where the ``$-\omega$'' 
  is a consequence of choosing $e^{-i\omega t}$ in \eqref{NarrowBA_psi}.
If $\omega$ and $\FTepsilon(-\omega)$ are both real
 then $\FTepsilon(-\omega)$  can be replaced with $\FTepsilon(\omega)$;
 but this is not allowed in our following calculations,
 because extra care must be taken when
 using complex permittivity to model damping. 
From Maxwell \eqref{Intro_Maxwell}
 we obtain a dispersion relation,
 and define the
 refractive index\cite{Nistad-S-2008pre,Kinsler-2009pra,Feigenbaum-KO-2009oe}.
These are
\begin{align}
  k^2 - \omega^2\ \mu_{\tinyzero} \,
  \FTepsilon(-\omega)
&=
 0
,
\label{Intro_DR}
\\
  \FTn(-\omega)^2
&=
  \cosq \,\mu_{\tinyzero} \,
  \FTepsilon(-\omega)
,
\label{Intro_def_n}
\end{align}
where
 $\co=(\epsilon_{\tinyzero}\mu_{\tinyzero})^{-1/2}$ is the vacuum
 speed of light.

We now ask whether these CRs
 correspond to a passive lossy medium,
 i.e. one dampened with no external energy added.
Given that 
 both $\omega$ and $k$ can be either real or complex, 
 there are two possibilities that are straightforward to consider.
These fit into the temporally propagated and spatially propagated
 viewpoints respectively \cite{Kinsler-2014arXiv-negfreq},
 and are:

First, 
 if $\omega$ is real and positive,
 we require that plane waves are spatially evanescent
 in the propagation direction, 
 which implies $\Im\big(\FTepsilon(-\omega)\big)>0$.

Second,
 if $k$ is real, 
 then we need $\Im(\omega)<0$ to damp the field,
 leading to the requirement that
 when
 $\omega^2\FTepsilon(-\omega)$ is real and positive,
  then $\Im(\omega)<0$.

How the fields
 represented by these modes,
 change as they cross a temporal boundary
 will depend on 
 {how the change in CRs is specified,
 and on}
 the chosen TBCs.

\section{Temporal Boundary Conditions}
\label{ch_TemporalBC}
Different types of electromagnetic TBC can be identified, 
 depending on the material response
 \cite{Xiao-MA-2014ol,Bakunov-M-2014ol,Bacot-LEFF-2016np}; 
 here we summarise focussing on
 how
 \emph{bound currents} represent the material response.
First,
 as depicted in Fig. \ref{fig_TempSurf},
 we could identify 
 the sudden change in the CRs at $t=\tzero$
 as leading to an instantaneous \emph{temporal-surface} dipole current
 ($\VJsurface$).
The total current, 
 $\VJtotal$, 
 in the medium is
\begin{align}
  \VJtotal
&=
  \VJusual - \delta(t-{\tzero})\,\VJsurface
,
\label{Intro_Jtime}
\end{align}
where $\VJusual$ is the usual finite current in Maxwell's equations:
\begin{equation}
\begin{gathered}
\nabla\cdot\VB=0
\,,\quad
\nabla\cdot\VD=\rho
\\
\nabla\times\VE + \partial_t{\VB}=0
\quadand
\nabla\times\VH - \partial_t{\VD} = \VJtotal
.
\end{gathered}
\label{Intro_Maxwell}
\end{equation}
Since the fields $\VE,\VB,\VD,\VH$ may be discontinuous we
write
\begin{align}
  \VD(t,\Vx)
&=
  \theta({\tzero}-t)\,\VD_\Xbfore(t,\Vx) + \theta(t-{\tzero})\,\VD_\Xafter(t,\Vx)
\label{D_def}
\end{align}
where $\VD_\Xbfore$ and $\VD_\Xafter$
 are the $\VD$ before and after the transition
 and $\theta$ is the Heaviside function.
Using \eqref{Intro_Jtime} and \eqref{D_def}
 in the Maxwell-Ampère equation,
 we have
\begin{align}
  \VJtotal
&=
  \VJusual
 -
  \delta(t-{\tzero})
  \VJsurface
=
  \nabla \times \VH
 -
  \partial_t{\VD}
\nonumber
\\
&=
  \theta({\tzero}-t)
  \nabla \times \VH_{\Xbfore}(t,\Vx)
 +
  \theta(t-{\tzero})
  \nabla \times \VH_{\Xafter}(t,\Vx)
\nonumber
\\
&\qquad
 -
  \theta({\tzero}-t)
  \partial_t \VD_{\Xbfore}(t,\Vx)
 -
  \theta(t-{\tzero})
  \partial_t \VD_{\Xafter}(t,\Vx)
\nonumber
\\
&\qquad
 -
  \delta(t-{\tzero})
  \big\{
    \VD_{\Xafter}(t,\Vx)
   -
    \VD_{\Xbfore}(t,\Vx)
  \big\}
  .
\nonumber
\end{align}
 This approach can also be used for $\VB$
 in the Maxwell-Faraday equation 
  to derive a similar result.
The TBC
 are
\begin{align}
  [\VD]=\VJsurface
\qquadand
  [\VB]=0
,
\label{Intro_Bdd_cond}
\end{align}
 where
 $[\VD]=\VD_{\Xafter}(\tzero,\Vx)-\VD_{\Xbfore}(\tzero,\Vx)$, etc.
One option \cite{Morgenthaler-1958irs,Kiasat-PEN-2018cleo}
 is to set $\VJsurface=0$, 
 to obtain the
 \emph{natural} TBC,
 i.e.
\begin{align}
[\VD]=0 \qquadand [\VB]=0
.
\label{TemporalBC_Natural}
\end{align}

Alternatively, 
 if treating $\VE$ and $\VB$ as the only physical fields,
 we get the \emph{fundamental} TBC,
 i.e.
\begin{align}
  [\VE]=0
\qquadand
  [\VB]=0
,
\label{Intro_Bdd_cond_E0}
\end{align}
which rely on \eqref{Intro_Bdd_cond}
 to calculate the conserved temporal-surface current $\VJsurface$.
This is an analogous approach to that describing
 the surface current around a permanent magnet
 \cite{Gratus-KM-2019ejp-dhfield}.

\section{Constant Complex CR gives Unphysical Results}
\label{ch_Constant}
%
%
 Even a time boundary between
 a vacuum with $\epsilon = \epsilon_{\tinyzero}$
 and a {lossy} medium with $\epsilon=\epsilonII$, 
 where $\epsilonII$ is a non real constant 
 with $\Im(\epsilonII)<0$  and $\Re(\epsilonII)>0$, 
 results in a failure.
For simplicity, 
 we set $\tzero=0$, 
 so that 
 $\VD(t,\Vx)=\epsilon_\tinyzero\,\VE(t,\Vx)$ for $t<0$,
 and
 $\VD(t,\Vx)=\epsilonII\,\VE(t,\Vx)$ for $t>0$; 
 and then choose a field polarization
 so that $\VE = E_x(t,z)\Vxhat$,
 and  $\VB = B_y(t,z)\Vyhat$.

Pre-boundary ($t<{0}$),
 we start with a single real mode
~
\begin{align}
  E_x(t,z)
&=
  E_{\tinyzero}
  \cos({\wbfore t - \kbfore z})
,
\label{IntroConst_E_vac}
\end{align}
 with $E_{\tinyzero}\in\Real$ and $B_y(t,z) =  E_x/\co$.
 Here $\wbfore, \kbfore$ are both real and positive
 and satisfy the vacuum dispersion relation
 $\cosq \kbfore^2 =\wbfore^2$.
Since the post-boundary lossy material, 
 with $\wafter$ and $\kafter$, 
 must have $\Im(\epsilonII) > 0$,
 the $t>{0}$ general solution is
\begin{equation}
\begin{aligned}
  E_x(t,z)
&=
  \left(
    \Amp \,e^{-i\wafter t + i \kafter z}
   +
    \Amm \,e^{-i\wafter t - i \kafter z}
   \right.
\\
&\qquad
   +
   \left.
    \Apm \,e^{i\wafter t - i \kafter z}
   +
    \App \,e^{i\wafter t + i  \kafter z}
  \right)
,  
\\
  B_y(t,z)
&=
   \frac{{\kafter}}{{\wafter}}
   \left(
     \Amp \,e^{-i\wafter t + i \kafter z}
    -
     \Amm \,e^{-i\wafter t - i \kafter z}
   \right.
\\
&\quad\qquad
   +
   \left.
     \Apm \,e^{i\wafter t - i \kafter z}
    -
     \App \,e^{i\wafter t + i \kafter z}
  \right)
,
\end{aligned}
\label{IntroConst_II_D}
\end{equation}
with the dispersion relation
\begin{align}
  {\kafter} \co 
=
  {\nafter} {\wafter}
\qquadtext{where}
  \nafter 
 =
  c_{\tinyzero}(\epsilonII\mu_{\tinyzero})^{1/2}
.
\label{IntroConst_II_DR_n}
\end{align}
The choice of root for ${\nafter}$  is unimportant,
 as both roots are included in \eqref{IntroConst_II_D}. 
The first root $\Re(\nafter)>0$, 
 since $\Im(\epsilonII)>0$, requires $\Im(\nafter)>0$;
 and the second root  $\Re(\nafter)<0$
 with $\Im(\nafter)<0$.
Applying the fundamental TBC \eqref{Intro_Bdd_cond_E0},
 just before the time boundary,
 we have
\begin{align}
  E_x(0^-,z)
&=
  \tfrac12{E_{\tinyzero}}
  \left( e^{i {\kbfore} z} + e^{-i {\kbfore} z} \right)
,
\label{IntroConst_bdd_I}
\end{align}
 and $B_y(0^-,z) = ({\kbfore}/{\wbfore}) E_x$.
Just after the time boundary
\begin{equation}
\begin{aligned}
  E_x(0^+,z)
&=
    (\Apm+\Amm) \,e^{- i {\kafter} z}
   +
    (\App+\Amp) \,e^{i {\kafter} z}
,
\\
  B_y(0^+,z)
&=
  \frac{{\kafter}}{{\wafter}} \!
  \left[
   (\Apm{-}\Amm) \,e^{- i {\kafter} z}
  +
   (\Amp{-}\App) \,e^{ i {\kafter} z}
  \right]\!
.
\end{aligned}
\label{IntroConst_bdd_II}
\end{equation}
Note that except for different constants,
 the result is the same as would be obtained
 using the natural TBC \eqref{TemporalBC_Natural}.

From \eqref{IntroConst_bdd_I} and \eqref{IntroConst_bdd_II}
 we see that ${\kafter}={\kbfore}$ or ${\kafter}=-{\kbfore}$ must hold and
$\wafter=\nbfore\,\wbfore/\nafter$.
Because we can choose ${\kafter}={\kbfore}$ without affecting the analysis,
 we can now rearrange the four unknowns to get
\begin{align}
 \Apm{=}\Amp
{=}
  \tfrac{1}{4}{E_{\tinyzero}}
  (1{+}\nafter)
\quadand
  \Amm{=}\App
{=}
  \tfrac{1}{4}
  {E_{\tinyzero}} 
  \left( 1 {-} {\nafter} \right)
.
\label{IntroConst_bdd_soln}
\end{align}
In general,
 for $t>{0}$
 all the $g^\pm_\pm$ coefficients of $E_x(t,z)$ are non zero.
Let $\cafter = \cR + i \cI=  \co /{\nafter}$ then $\cR>0$ and $\cI<0$,
 and expand \eqref{IntroConst_II_D} to yield
\begin{equation}
\begin{aligned}
  E_x(t,z)
 &=
    \Amp \,e^{i {\kafter}(-\cR t + z)} e^{ {\kafter}\cI t}
   +
    \Amm \,e^{i {\kafter}(-\cR t - z)} e^{ {\kafter}\cI t}
\\&~~~
   +
    \Apm \,e^{i {\kafter}(\cR t - z)} e^{- {\kafter}\cI t}
   +
    \App \,e^{i {\kafter}(\cR t + z)} e^{- {\kafter}\cI t}
.
\end{aligned}
\label{D_II_alt}
\end{equation}
Now we have $ {\kafter}={\kbfore}>0$ and $\cI<0$,
 so $E$ increases exponentially with time
 \emph{despite} this being a {lossy} medium.
Clearly this is physically invalid,
 so the constant complex CR model has \emph{failed}.
This has a crucial significance
 for any technology relying on temporal boundaries.
For example, 
 consider an EM wave propagating in an engineered material
 based on acrylonitrile butadiene styrene (ABS)
 whose permittivity is changed at time $t_0$
 from that of ABS ($\epsilon= 2.55 + i 0.007$)
 to $\epsilon= 2 + i0.008$;
 according to a constant complex permittivity model
 the amplitude of the EM wave is predicted to amplify exponentially,
 even though the material stays lossy.
Further, 
 substituting \eqref{IntroConst_bdd_soln} into \eqref{D_II_alt}
 we observe that in general $E_x(t,z)$ is complex valued,
 despite the TBC \eqref{Intro_Bdd_cond_E0}
 and the initial field \eqref{IntroConst_E_vac} being real-valued.
This is because the wave equation for the medium 
 when $t>0$, 
 from \eqref{Intro_Maxwell} and $\VJ_{\!\textup{total}}=0$,
 is
\begin{align}
\nabla^2\VE - \epsilonII\ddot\VE = 0
,
\label{Constant_deq_E}
\end{align}
i.e. not a real equation in real unknowns; 
 a point whose significance might be missed
 if expecting to take the real part.


\section{Using a Narrowband Approximation}
\label{ch_NarrowBA}
Using a narrowband approximation 
 we can recover the correct physical behaviour, 
 although  
 when solving the dispersion \eqref{Intro_DR}
 one needs to be careful about the square root. 
We choose $\omega, k$ in \eqref{NarrowBA_psi}
 to satisfy the dispersion relation
\begin{align}
k\, \co - \FTn(-\omega)\,\omega = 0
,
\label{NarrowBA_Dispersion}
\end{align}
and consider all relevant frequencies.
With an over-bar denoting complex conjugates, 
 the associated solutions are
 created by substituting  $\omega\to \pm{\omega}$,
 $\omega\to \pm\cnj{\omega}$,
 $k\to \pm k$,
 and
 $k\to \pm \cnj{k}$.
Then, 
 combining \eqref{Intro_DR} and \eqref{Intro_def_n}
 gives $c_\tinyzero^2k^2-\omega^2\FTn(-\omega)^2=0$.
The eight roots of this and its
 complex conjugate are given by
\begin{align}
  E_x=
  &\exp({-i\omega t \pm i k z})
  &&\text{\!\!satisfies \ }
{\FTn(-\omega)\,\omega    = \pm\co k}  ,
\label{eqn-table-first}
\\
  E_x=
  &\exp({i\omega t \mp i k z})
    &&\text{\!\!satisfies \ }
{\FTn(\omega)\,\omega = \pm\co k},
\label{eqn-table-second}
\\
  E_x=
  &\exp({-i\cnj{\omega} t \pm i \cnj{k} z})
    &&\text{\!\!satisfies \ }
{\cnj{\FTn(\omega)}\,\cnj{\omega} = \pm  \co\cnj{k} } ,
\label{eqn-table-third}
\\
  E_x=
  &\exp({i\cnj{\omega} t \mp i \cnj{k} z})
    &&\text{\!\!satisfies \ }
{\FTn({\cnj\omega})\,\cnj{\omega} =   \pm\co\cnj{k} }
,
\label{eqn-table-fourth}
\end{align}
 (see Supplementary Material) 
 which uses 
 ${\FTn({-\cnj\omega})} = \cnj{\FTn({\omega})}$, 
  a consequence of the reality condition 
  on $\FTn$.

Now, 
 if we apply the NBA, 
 we know that 
 all the fields are concentrated
 about two modes, 
 i.e. at $\omega \approx \omega_{\tinyzero}$
 and $\omega \approx \cnj\omega_{\tinyzero}$.
Let $\nconst=\FTn(-\omega_{\tinyzero})$, 
 so that for 
 $\omega\approx\omega_{\tinyzero}$ we have
 $\FTn(-\omega)\approx\nconst$.
Similarly, 
 we also have
 $\cnj\nconst = \FTn(\cnj\omega_{\tinyzero}) $.  

To obtain a real $E_x(t,z)$ we need to add the complex conjugate.
As $\Im(\omega_{\tinyzero})=\Im(-\cnj\omega_{\tinyzero})$
 we construct the total field from \eqref{eqn-table-first}
 and \eqref{eqn-table-fourth} above to give
\begin{align}
  E_x(t,z)
&=
  \Amp\, e^{-i\omega t + i k z}
 +
  \Amm\,e^{-i\omega t - i k z}
\nonumber
\\
&\quad
 +
  \Apm\,e^{i\cnj{\omega} t - i \cnj{k} z}
 +
  \App\,e^{i\cnj{\omega} t + i \cnj{k} z}
.
\label{Calc_Gen_D}
\end{align}

We are now in a position
 to reconsider our time boundary system
 in the context of dispersive media. 
Before the time boundary $t<\tzero=0$ we have the vacuum. 
Assuming a narrowband initial pulse
 enveloping the single mode \eqref{IntroConst_E_vac},
 where $\wbfore=\co\kbfore\in\Real$, 
 and $\kbfore>0$. 
Whichever TBC we assume we obtain $\kafter=\pm\kbfore$. 
Again the choice of root is unimportant,
 therefore we set $\kafter=\kbfore$,
 so that $\kafter>0$. 
After the time boundary,
 we have dispersion relations given by
 \eqref{eqn-table-first}--\eqref{eqn-table-fourth}. 
In order to obtain the physical solutions
 consistent with $\kafter>0$
 we choose $\Re(N)>0$,
 $\Im(N)>0$ and the modes
 given by \eqref{eqn-table-first} and \eqref{eqn-table-fourth}.
For convenience we define a wave speed $\CC=\CR+i\CI=\co/\nconst$,
 where $\CR,\CI\in\Real$, $\CR>0$ and $\CI<0$.
From \eqref{Calc_Gen_D} we have,
\begin{equation}
\begin{aligned}
&E_x(t,z){=}
 E_{\tinyzero}\, e^{k\CI t}
\Big(
\frac{\nconst{+}1}{\nconst{+}\cnj{\nconst}} \,e^{-i k(\CR t - z)}
+
\frac{\cnj\nconst{+}1}{\nconst{+}\cnj{\nconst}}\,e^{-i k(\CR t + z)}
\Big)
\end{aligned}
\label{Calc_Gen_E_expand}
\end{equation}
where $k=\kafter>0$
(see Supplementary Material), 
 and the EM field remains real and damped.
However,
 we no longer have a single
 differential equation \eqref{Constant_deq_E},
 as we now need two ($\nconst$, $\cnj\nconst$):
\begin{equation}
\begin{aligned}
\begin{cases}
\nabla^2\VE - {\nconst}^2 \ddot{\VE} = 0
&\text{ for frequencies $\omega\approx\omega_{\tinyzero}$ in $\VE$ },
\\
\nabla^2\VE - {\cnj\nconst}^2\ddot{\VE} = 0
&\text{ for frequencies $\omega\approx\cnj{\omega_{\tinyzero}}$
in $\VE$ }
.
\end{cases}
\end{aligned}
\label{Calc_NBA_PDE}
\end{equation}
So by using the NBA with dispersion,
 we can match the TBC,
 and obtain a physical, 
 dampened, 
 real-valued solution for the electric field; 
 but the two post-boundary refractive indices required
 suggest that more appropriate models are necessary.

\def\Eng{{\cal E}}

\newcommand\Tsub[2]{{#1}_{#2}}
\def\Etot{\Tsub{\Eng}{\text{total}}}
\def\Eem{\Tsub{\Eng}{\text{EM}}}
\def\Eee{\Tsub{\Eng}{\!E}}
\def\Ebb{\Tsub{\Eng}{\!B}}
\def\Ejb{\Tsub{\Eng}{\!J_{\text{b}}}}

\begin{figure}
\resizebox{0.495\columnwidth}{!}{\includegraphics{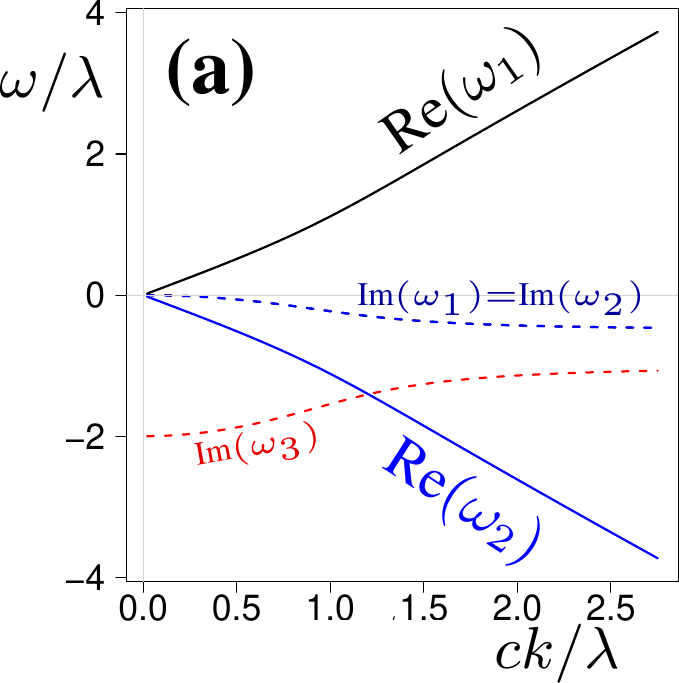}}
\resizebox{0.455\columnwidth}{!}{\includegraphics{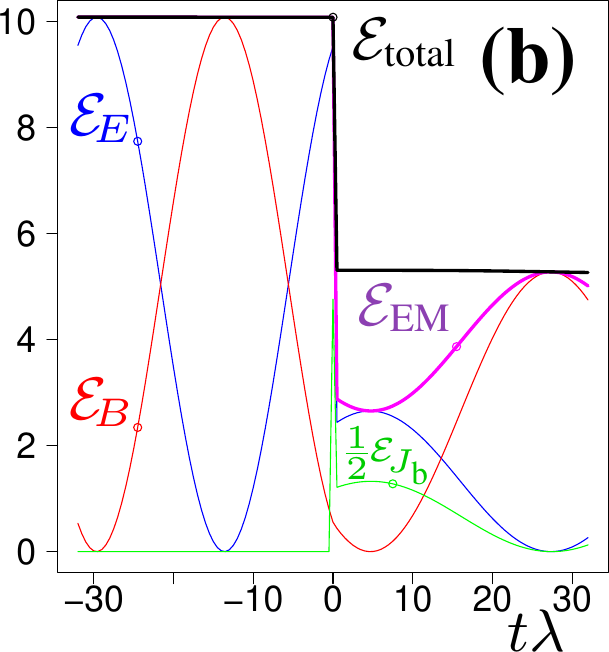}}\\
~~~
\resizebox{0.455\columnwidth}{!}{\includegraphics{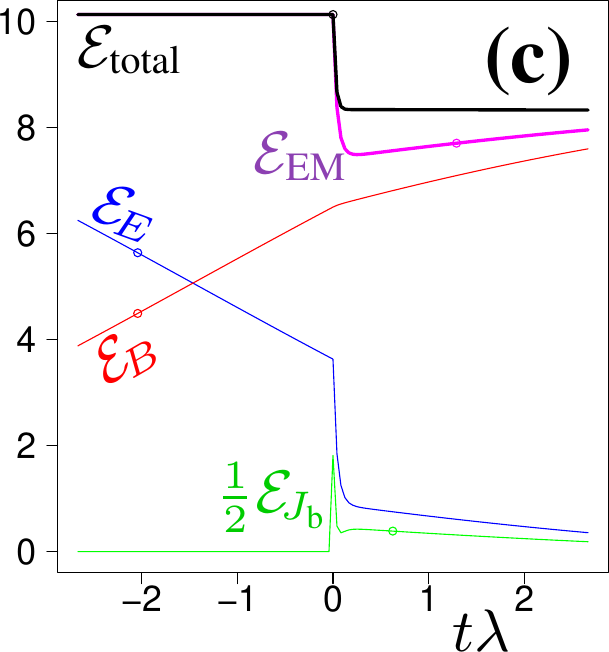}}
\resizebox{0.455\columnwidth}{!}{\includegraphics{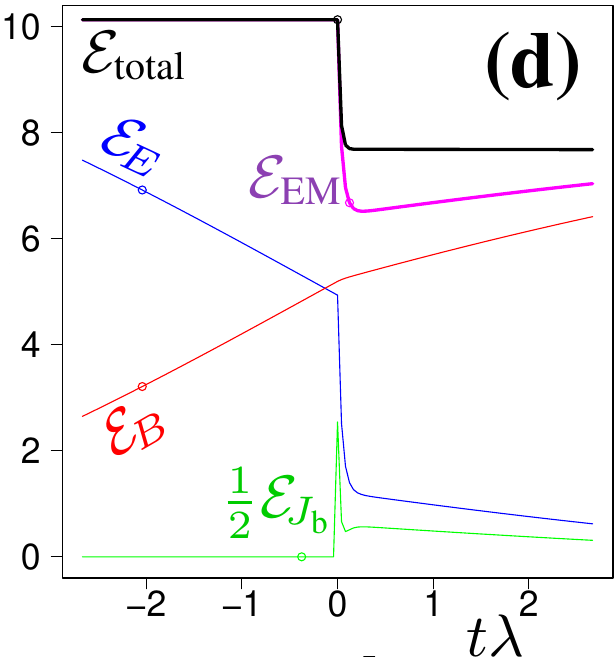}}
\caption{
(a) The dispersion relation \eqref{ABC_DR}
 for the simple material response model of \eqref{ABC_DE_P}, 
 where
 $\chi_{\protect\tinyzero}=\epsilon_{\protect\tinyzero}\lambda$
 and real $k$.
The two propagating solutions $\omega_1$, $\omega_2 \in\Cmpx$
 have the same small damping,
 while the third $\omega_3 \in \imath \Real$ is pure loss.
In (b,c,d) we see what happens
 as the system crosses a time boundary at $t=0$
 between
 vacuum $\chi_{\protect\tinyzero}=0$
 and 
 $\chi_{\protect\tinyzero}=\epsilon_{\protect\tinyzero}\lambda$; 
 where (c,d) show a narrower range of $t$ around $t=0$.
The different initial conditions used in (b,c,d)
 demonstrate the timing-sensitive behaviour caused by the boundary.
These graphs compare the energies in the three fields
 $\Eee=\epsilon_{\protect\tinyzero} E^2$,
 $\Ebb=B^2/\mu_{\protect\tinyzero}$
  and
 $\Ejb=J_{\text{b}}^2/(\chi_0\lambda)$; 
 but we show $\tfrac{1}{2}\Ejb$ to increase visibility.
Total energy is conserved for $t<0$; 
 but just after the time boundary,
 $\Eee$ (and hence $\Eem$)
 rapidly reduce as excitation is transferred to $J_{\text{b}}$,
 where it is strongly damped.
The energy lost though $J_{\text{b}}$ depends on 
 how strongly it is driven by $E$
 until the new dynamic near-equilibrium is reached.
Direct damping from the lossy $\omega_3$ solution
 is shown by \emph{the rapid and significant decay} of $\Ejb$
 just after the transition.
}
\label{fig-minimal-time}
\end{figure}

\def\epszero{\epsilon_{\tinyzero}}
\def\chizero{\chi_{\tinyzero}}

\section{Dynamic Material Models}
\label{ch_ABC}

It is not possible to both implement a 
 physically consistent time boundary
 with a constant complex $\epsilonII$, 
 and escape the requirement for the NBA in \eqref{Calc_NBA_PDE}.
This means
 we must instead use an explicitly causal \cite{Kinsler-2011ejp}
 dynamic response model
 for the medium,
 thus providing fully dispersive CR.
Since a response model adds extra field(s)
 to describe the material response, 
 the coupled EM-material system
 will both have more than two modes and need ABCs.
Essentially, 
 defining a material response model immediately creates a demand
 for ABCs -- 
 they are simply the boundary conditions on the auxilliary fields 
 (such as polarization $\VP$ or bound current $\VJ_{\textup{b}}$)
  that one has decided to use.

A minimal but sufficient material response can be given by
\begin{align}
  \VD   =   \epszero  \VE + \VP
\quadtext{where}
  \dot{\VP} 
=
 -
  \lambda \VP
 +
  \chizero \VE
,
\label{ABC_DE_P}
\end{align}
so that in the steady state $\VP = \chizero  \lambda^{-1} \VE$.
In this model, 
 as long as the loss $\lambda$ is large compared to the field frequency, 
 with the desired change
 in permittivity $\Delta \epsilon = \chizero /  \lambda$ held fixed, 
 it indeed responds as if it were a medium of complex constant CR.
This model also requires a boundary condition
 for the dielectric polarization $\VP$ field,
 namely
 $[\VP]=0$.

However, 
 it is significant that
 the polarization field \eqref{ABC_DE_P}
 is in fact derived from a bound current
 $\VJ_{\textup{b}} = \dot\VP = - \lambda\VP
                   + \chizero\VE$.
This $\VJ_{\textup{b}}$
 is driven by, 
 and hence indirectly applies loss to
 the field $\VE$; 
 and our simulation results
 shown in figure \ref{fig-minimal-time}
 use this $\VJ_{\textup{b}}$ approach.
Used after the time boundary ($t>{\tzero}$),
 this medium
 has a dispersive behaviour given by
\begin{align}
  \FTepsilon(-\omega)
&= 
  \epsilon_{\tinyzero}
 +
  \frac{\chi_{\tinyzero}}
       {\lambda-i\omega}
,
\label{ABC_epsilon}
\end{align}
where $\chi_{\tinyzero}>0$ and $\lambda>0$; 
 and the steady state is reached when $\omega/\lambda \rightarrow 0$. 
The resulting cubic dispersion relation is
\begin{align}
 \epsilon_{\tinyzero} \mu_{\tinyzero}
  \left( \lambda - i \omega \right) 
  \omega^2
 +
  \mu_{\tinyzero} \chi_{\tinyzero} \omega^2
 -
  k^2 \left( \lambda - i \omega \right)
=
  0
.
\label{ABC_DR}
\end{align}
This equation in $\omega$ and fixed  $k^2$
 produces three solutions
 as opposed to the two
 given in \eqref{IntroConst_II_D} or \eqref{Calc_Gen_E_expand}.
Since \eqref{ABC_DR} has real coefficients
 when written as a polynomial in $(i\omega)$,
 and 
 as a cubic has at most two non-real roots for $i\omega$, 
 the three solutions consist of a complex conjugate pair
 and a real valued one.
The pair correspond to counter propagating waves (modes)
 with the same damping,
 with the other solution (mode) being non-propagating and purely damped
 (see figure \ref{fig-minimal-time}(a)).

At a time boundary where $k\in\Real$ and $k>0$,
 we have three outgoing modes, 
 requiring three boundary conditions.
Two TBC are given by either the natural \eqref{TemporalBC_Natural}
 or fundamental \eqref{Intro_Bdd_cond_E0} TBC \eqref{Intro_Bdd_cond}; 
 the ABC of course is just the $[\VP]=0$
 evident from the dynamic model \eqref{ABC_DE_P}.
This ABC is analogous to the Pekar ABC \cite{Pekar-1958zetf},
 which are required when an EM wave
 passes into a spatially dispersive 
 medium \cite{AgranoGinsberg,Kinsler-2021pnfa-spatype}.

Simulation results are shown
 in figure \ref{fig-minimal-time}(b,c,d), 
 demonstrating the system behaviour as it passes the boundary.
Despite the excellent match to a medium with constant complex CR
 before and sufficiently far after the boundary transition, 
 it does not exhibit the unphysical behaviour of
 prescribed constant complex CRs.
Instead,
 just after $t=t_b=0$ in
 figures \ref{fig-minimal-time}(c,d)
 we can see
 a rapid rebalancing as $\VE$ and $\VJ_{\textup{b}}$
 (i.e. $\dot{\VP}$)
 adjust to the recently changed $\chizero$.

Note also the overshoot in  $\Eng_J$ 
 just after the boundary, 
 which can be attributed
 to the dampened $\omega_3$.
These occur on a timescale set by $\lambda$,
 and incur an $\VE$ dependent energy loss.
From a dynamic perspective, 
 system appears to have two loss processes, 
 although both are in fact different manifestations of the same 
 $\lambda$ loss term.
In a steady-state medium,
 i.e. away from the boundary, 
 the losses are gradual and proportional to $\omega/\lambda$.
This counter-intuitive dependence on the $1/\lambda$
 is because the loss depends on the mismatch
 between the ideal $\VJ_{\textup{b}}$ (or polarization $\VP$)
 and the actual value; 
 but for larger $\lambda$ values,
 the mismatch becomes smaller.
In contrast, 
 as the system transitions across the time boundary, 
 the sudden change in $\chizero$ means that the $\VE$-dependent mismatch 
 can suddenly become very large, 
 and this causes equally large and rapid losses
 as $\VJ_{\textup{b}}$ (or $\VP$)
 re-synchronise to the electric field $\VE$.
However, 
 note that in the special case where $\VE$ is zero at the boundary, 
 the mismatch remains small and no significant rapid loss takes place.

\def\Kfield{{\boldsymbol K}}

A key further point of interest
 is whether there are any side-effects if the model parameters
 $\lambda$ and $\chizero$ change with time.
On the basis of \eqref{ABC_DE_P} this does not appear
 to be the case, 
 since the only time derivative term is that on the LHS, 
 applied to $\VP$.
However, 
 to check this properly we need to reformulate the model
 so that it is explicitly based on bound currents,
 which are the true microscopic physical property.
Taking the time derivative of \eqref{ABC_DE_P}
 and rearranging (see Appendix \ref{S-minimaldetail})
 leads to the model equation
~
\begin{align}
  \dot{\Kfield}
&=
   -
  \lambda
  \left[
    \Kfield
   +
    \chizero \VE  
  \right]
 ~
 +
 ~
  \frac{\dot{\lambda}}{\lambda}
  \Kfield
,
\label{eqn-trivial-dK0}
\end{align}
 where $\Kfield = \VJ_{\textup{b}} - \chizero \VE$
 is an offset for the bound current $\VJ_{\textup{b}}$.
This construction has ensured that there is only one
 time derivative applied to one field quantity
 (i.e. $\Kfield$), 
 so that it retains an unambiguously causal form \cite{Kinsler-2011ejp}; 
 it is in fact this equation we integrate, 
 along with Maxwell's equations,
 to get the results
 shown in Fig. \ref{fig_TempSurf}(b-d)
We can see from the equation that the value of $\Kfield$
 is going to be continually trying to catch up to 
 the present value (albeit scaled) of $\VE$, 
 on a timescale set by $\lambda$.
Since the introduction of $\Kfield$ removes any dependence on 
 the time derivative of $\chizero$,
 a temporal boundary in the $\chizero$ value is implemented simply 
 by changing the parameter $\chizero$ as we integrate step by step.
However, 
 extra care need to be taken if $\lambda$ changes
 at a temporal boundary; 
 since its time derivative $\dot{\lambda}$ also appears.
This would either be specified as part of the simulation environment
 defining $\chizero$, $\lambda$, $\dot{\lambda}$; 
 or we might add an auxilliary equation for $\lambda$
 if it had its own temporal dynamics.



This model (i.e. \eqref{ABC_DE_P}, or \eqref{eqn-trivial-dK0}), 
 being defined by a temporal differential equation, 
 is necessarily explicitly causal \cite{Kinsler-2011ejp}.
It works as intended, 
 i.e. to create a near-constant permittivity, 
 when 
 (i) 
 the desired positive permittivity shift $\chi_{\tinyzero} > 0$, 
 and when
 (ii)
 $\chizero/\lambda \epszero \lessapprox 3$.
In this regime the polarization (or microscopic polarization current)
 adiabatically follows the phase of electric field, 
 and so accurately models the desired effective permittivity; 
 indeed the effective loss at low frequencies
 is proportional to $\omega/\lambda$, 
 i.e. increasing the polarization current loss $\lambda$ 
 actually \emph{reduces} the effective damping
 in the CW limit.
However, 
 if the first condition does not hold,
 a positive polarization can have a negative energy; 
 thus the amplitudes of $\VE$ and $\VP$ 
 can increase without limit
 whilst still conserving energy. 
Alternatively, 
 if the parameters change so that the second condition starts to fail, 
 the three modes --
 two electromagnetic and one polarization --
 become ever more strongly coupled and 
 eventually exhibit a complicated dynamics 
 (and dispersion) 
 not relevant to our presentation here.

\section{Beyond the minimal dynamic model}
\label{S-dynoplus}

The minimal model used above 
 performs well, 
 and has the considerable advantage of having
 a simple behaviour,
 thus clarifying the general principles for
 handling time boundaries.
However, 
 it is not typical of the material models used in practical situations
 which are often based on Drude or Lorentz oscillators
 that consider the polarization $\VP$.
Consequently, 
 we now consider a more general situation
 by using the result that any causal response 
 can be expressed as a sum of Lorentz responses \cite{Dirdal-S-2013pra}.
Working in the time domain, 
 we have  
 $\VD=\epsilon_{\tinyzero} \VE+\sum_{\sind}\VP_\sind$.
With $\sind_{\max}$ being the number of oscillators,
 we have then $\sind_{\max}$ second order equations:
\begin{align}
\ddot{\VP}_\sind + \left(\lambda_\sind{\VP}_\sind\right)\dot{\,} + \alpha_\sind \VP_\sind
  = \chi_\sind \VE
.
\label{ABC_2nd_order}
\end{align}
Here $\chi_\sind$ is the coupling,
 $\lambda_\sind$ is the damping,
 and $\alpha_\sind$ the natural frequency of the oscillator.
Note that the time derivative in the second LH term 
 acts on the product $\lambda_\sind {\VP}_\sind$, 
 not just on ${\VP}_\sind$.
Since the Lorentian oscillators follow second order dynamical equations, 
 they will therefore each require two extra boundary conditions, 
 for a total of $2\sind_{\max}$ ABCs.
The natural ABCs for the $\sind$-th oscillator are
\begin{align}
[\VP_\sind]=0
\qquadand
[2\dot{\VP}_\sind+\lambda_\sind\,\VP_\sind]=0
.
\label{ABC_2nd_order_ABC}
\end{align}
To see this we substitute $
\VP(t,\Vx)
=
  \theta({\tzero}-t)\,\VP_\sind^\Xbfore(t,\Vx) +
  \theta(t-{\tzero})\,\VP_\sind^\Xafter(t,\Vx)
$ and likewise for $\lambda_\sind$, $\alpha_\sind$ and $\chi_\sind$,
into (\ref{ABC_2nd_order}) and assume (\ref{ABC_2nd_order}) holds on
  both sides of the boundary. 
This gives 
\begin{align*}
\delta'({\tzero}-t)\,\big(\VP_\sind^\Xafter- \VP_\sind^\Xbfore\big)
+
2\,\delta({\tzero}-t)\,
\big(\dot{\VP}_\sind^\Xafter - \dot{\VP}_\sind^\Xbfore\big)
 \qquad&
\nonumber
\\
+
\delta({\tzero}-t)\,
\big(\lambda_\sind^\Xafter{\VP}_\sind^\Xafter 
- \lambda_\sind^\Xbfore{\VP}_\sind^\Xbfore\big)
&=0.
\end{align*}
The coefficients of $\delta'({\tzero}-t)$ and $\delta({\tzero}-t)$
must both be zero. This gives (\ref{ABC_2nd_order_ABC}).
Observe that if $[\lambda_\sind]=0$ then (\ref{ABC_2nd_order_ABC})
reduces to $[\VP_\sind]=0$ and
$[\dot{\VP}_\sind]=0$.
This result explains why in \eqref{ABC_2nd_order}
 that $\lambda_\sind$ is inside the time derivative in the second LH term.
If it had been outside the derivative, 
 then if $[\VP_\sind]\ne0$, 
 the alternative $\lambda_\sind({\VP}_\sind)\dot{}$ 
 would contain the product $\delta(t-{\tzero})\theta(t-{\tzero})$, 
 which is not defined mathematically.
An alternative method for setting out these ABCs
 would be to factorise \eqref{ABC_2nd_order}
 into two first order pieces; 
 and it is also possible to follow either type of analysis
 for a bound current $\VJ_{\textup{b}}$
 \cite{Kinsler-2017arXiv-dehydro}.

We can also calculate the number of ABCs required
 in the frequency domain,
 using the case of no boundary, 
 or when the system is very far from the boundary.
Here the parameters $\chi_\sind$, 
 $\lambda_\sind$
 can be treated as constants, 
 where
 so that
 \eqref{ABC_2nd_order} gives us the dispersion 
\begin{align}
\FTepsilon(-\omega)
=
\epsilon_{\tinyzero} +
\sum_{\sind=1}^{\sind_{\max}} \frac{\chi_\sind}{-\omega^2 -
\imath\lambda_\sind \omega + \alpha_\sind^2}
.
\label{ABC_epsilon_sum}
\end{align}
Expanding out the dispersion relation resulting from \eqref{Intro_DR}
 then
 leads to a polynomial in $\omega$ of degree $(2\sind_{\max}+2)$, 
 and hence $(2\sind_{\max}+2)$ modes.
Matching these modes across a boundary between 
 different materials will then require $(2\sind_{\max}+2)$ 
 boundary conditions, 
 $2\sind_{\max}$ of which are ABCs.

\section{Conclusion}

We have shown that even simple time boundaries in optics
 cannot be described by
 the standard ``constant complex permittivity'' model.
Only with a dynamic or
 dispersive model of the propagation medium
 can physical results be predicted.
This conclusion is supported by 
 NBA calculations and time domain simulations, 
 and is easily generalized
 to other wave systems.
It has significant implications for the design and modelling of 
 both experiments and applications of future technologies
 based on temporal boundary phenomena.
Our conclusion may be particularly relevant to the 
 propagation of relativistically intense electromagnetic waves
  in plasma, 
 the propagation of relativistic plasma waves
  in laser wakefield accelerators in the bubble regime, 
 and
 for relativistic ionisation fronts and relativistically induced
 transparency in laser-solid interactions.

It is arguably unsurprising that a good model of a time boundary
 requires a model that can admit non-trivial time dependence, 
 i.e. either a time domain response model
 or a frequency domain dispersion.
Time boundaries are a temporal,
 dynamic phenomena,
 and need to be treated as such.


%
\acknowledgments

\noindent
\textbf{Acknowledgments:}
{JG and PK acknowledge support provided
by STFC (Cockcroft Institute, ST/G008248/1 and ST/P002056/1), 
and with DAJ acknowledge 
 EPSRC (Lab in a Bubble, EP/N028694/1). 
DAJ also acknowledges
 the EU's Horizon 2020 programme (no. 871124 Laserlab-Europe),
RS thanks AFOSR (FA8655-20-1-7002), 
 and
 PK acknowledges
 recent support from EPSRC 
 (QUANTIC, EP/T00097X/1).}
 (QUANTIC, EP/T00097X/1). 

%


\appendix


\appendix





\section{In Linear Media -- spatial evanescence}

\begin{lemma}
Given that for $\omega\in\Real$, $\omega>0$ plane waves are spatially
evanescent in the propagation direction then
$\Im\big(\FTepsilon(-\omega)\big)>0$.
\end{lemma}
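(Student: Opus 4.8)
The plan is to work directly from the dispersion relation \eqref{Intro_DR}, namely $k^2 = \omega^2 \mu_{\tinyzero} \FTepsilon(-\omega)$, and to read off the sign of $\Im\big(\FTepsilon(-\omega)\big)$ by taking imaginary parts. Since $\omega$ is real and positive and $\mu_{\tinyzero}$ is a fixed real positive constant, the prefactor $\omega^2 \mu_{\tinyzero}$ is real and positive; hence $\Im\big(\FTepsilon(-\omega)\big)$ has the same sign as $\Im(k^2)$. Writing $k = \Re(k) + i\,\Im(k)$, one has $\Im(k^2) = 2\,\Re(k)\,\Im(k)$, so the whole statement reduces to showing that $\Re(k)$ and $\Im(k)$ share the same sign, i.e. $\Re(k)\,\Im(k) > 0$.

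First I would pin down the meaning of ``evanescent in the propagation direction''. For the mode $E_x = E_{\tinyzero}\exp(-i\omega t + i k z)$ with $\omega > 0$, the surfaces of constant phase move with velocity $\omega / \Re(k)$, so the wave travels towards $+z$ exactly when $\Re(k) > 0$ and towards $-z$ exactly when $\Re(k) < 0$. The spatial amplitude is $\big|e^{i k z}\big| = e^{-\Im(k)\,z}$, which falls off as $z$ increases precisely when $\Im(k) > 0$, and falls off as $z$ decreases precisely when $\Im(k) < 0$.

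The key step is to combine these two observations: demanding decay along the \emph{actual} direction of propagation forces the real and imaginary parts of $k$ to agree in sign. If the wave runs towards $+z$ (so $\Re(k) > 0$), evanescence requires the amplitude to vanish as $z \to +\infty$, hence $\Im(k) > 0$; if it runs towards $-z$ (so $\Re(k) < 0$), evanescence requires the amplitude to vanish as $z \to -\infty$, hence $\Im(k) < 0$. In either case $\Re(k)\,\Im(k) > 0$, so $\Im(k^2) = 2\,\Re(k)\,\Im(k) > 0$, and dividing by the positive real quantity $\omega^2 \mu_{\tinyzero}$ yields $\Im\big(\FTepsilon(-\omega)\big) > 0$, as claimed.

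I expect the only genuine subtlety to be this sign bookkeeping: correctly recognising that the propagation direction is governed by $\Re(k)$ while evanescence is governed by $\Im(k)$, and verifying that the conclusion does not depend on which root of the dispersion relation one chooses, since replacing $k$ by $-k$ flips both $\Re(k)$ and $\Im(k)$ and so leaves both $\Re(k)\,\Im(k)$ and $k^2$ unchanged. Everything else is an elementary manipulation of the quadratic relation.
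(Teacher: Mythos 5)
Your proposal is correct and follows essentially the same route as the paper's own proof: a case split on the sign of $\Re(k)$ to determine the propagation direction, deducing that evanescence forces $\Im(k)$ to share that sign, concluding $\Im(k^2)>0$, and then dividing the dispersion relation \eqref{Intro_DR} by the positive quantity $\omega^2\mu_{\tinyzero}$. The only cosmetic difference is that you make the step $\Im(k^2)=2\,\Re(k)\,\Im(k)$ explicit where the paper phrases it as $k$ lying in the top-right or bottom-left quadrant of $\Cmpx$.
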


\begin{proof}
Given \eqref{NarrowBA_psi} with $\omega>0$. For $\Re(k)>0$ then the
direction of propagation is positive $z$. If the plane waves are
evanescent for positive $z$ then $\Re(i k z)<0$ hence $\Im(k)>0$. This
implies $k$ lies in the top right quadrant of $\Cmpx$. Hence
$\Im(k^2)>0$.

Likewise for $\Re(k)<0$ then the
direction of propagation is negative $z$. If the plane waves are
evanescent for negative $z$ then $\Re(i k z)<0$ hence $\Im(k)<0$. This
implies $k$ lies in the bottom left quadrant of $\Cmpx$. Hence
$\Im(k^2)>0$.

 In both cases $\Im(k^2)>0$ and since $\omega^2>0$ then
\eqref{Intro_DR} implies
$\Im\big(\FTepsilon(-\omega)\big)>0$.
\end{proof}

\section{In Linear Media  --  a note on negative refractive index}


From \eqref{Intro_def_n},
  $\Im\left( \FTepsilon(-\omega) \right) > 0$ implies that
 $\FTn(-\omega)$ is either 
 (a) in the top right quadrant of the complex plane 
    $\Set{\Re(\FTn) > 0  \text{~and~}  \Im(\FTn) > 0}$, 
 or 
 (b) in the bottom left quadrant 
    $\Set{\Re(\FTn) < 0 \text{~and~} \Im(\FTn) < 0}$.
Consequently, 
 having $\Re\left( \FTn(-\omega) \right) < 0$
 does not contradict our 
 assumption
 that the real part of both permittivity and permeability are positive:
 it is still possible to have a negative index of refraction
 \cite{Boardman-KV-2005e,SchusterOPTICS}.

\section{Using a Narrowband Approximation  --  the eight roots}

\begin{proof}[Demonstration of
    \eqref{eqn-table-first}-\eqref{eqn-table-fourth}]
Since we chose \eqref{NarrowBA_psi} to satisfy
 \eqref{NarrowBA_Dispersion} then
 this is \eqref{eqn-table-first} for
 $E_x=\exp({-i\omega t + i k z})$. 
Replacing $k\to=k$ then gives
 \eqref{eqn-table-first}.

Replacing $\omega\to -\omega$ and $k \to-k$ then gives
 \eqref{eqn-table-third}.

Taking the complex conjugate of \eqref{eqn-table-first}
 and using ${\FTn({-\cnj\omega})}=\cnj{\FTn({\omega})}$
 gives \eqref{eqn-table-fourth}.

Replacing $\omega\to -\omega$ and $k \to  -k$
 in \eqref{eqn-table-fourth}
 then gives \eqref{eqn-table-third}.

\end{proof}

\section{Using a Narrowband Approximation  --  proof}

\def\nafter{N}
\begin{proof}[Proof of \eqref{Calc_Gen_E_expand}]

From \eqref{Calc_Gen_D} we have
\begin{equation}
\begin{aligned}
  B_y(t,z)
&=
  \frac{e^{k\CI t}}{\CI-i\CR}
  \Big(
    \Amp \,e^{-i k(\CR t - z)}
   -
    \Amm \,e^{-i k (\CR t + z)}\Big)
\\
& \quad + \frac{e^{k\CI t}}{\CI+i\CR}
   \Big(
    \Apm \,e^{i k (\CR t - z)}
   -
    \App \,e^{i k (\CR t + z)}
  \Big)
\\
&=
  \frac{\cnj\nafter\, e^{k\CI t}}{\co}
  \Big(
    \Amp \,e^{-i k(\CR t - z)}
   -
    \Amm \,e^{-i k (\CR t + z)}\Big)
\\
& \quad + \frac{\nafter\,e^{k\CI t}}{\co}
   \Big(
    \Apm \,e^{i k (\CR t - z)}
   -
    \App \,e^{i k (\CR t + z)}
  \Big)
\end{aligned}
\end{equation}
Using the fundamental TBC \eqref{Intro_Bdd_cond_E0}
 we get for $t=0^+$ and $k\in\Real$
\begin{align*}
  E_x(0^+,z)
&=
  e^{i k z} (\Amp + \App) + e^{-i k z} (\Amm+\Apm)
\\
  \co B_y(0^+,z)
&=
 e^{i k z}(\cnj{\nafter}\Amp -\nafter\App)
+
 e^{-i k z}(\nafter\Apm-\cnj{\nafter}\Amm )
\end{align*}
and hence
\begin{align*}
\Amp = \cnj{\Apm} =
\frac{E_{\tinyzero}}{2}
\frac{\nafter+1}{\nafter+\cnj{\nafter}}
\quadand
\Amm = \cnj{\App}=
\frac{E_{\tinyzero}}{2}
\frac{\cnj\nafter+1}{\nafter+\cnj{\nafter}}
\end{align*}
Substituting into \eqref{Calc_Gen_D}
 gives \eqref{Calc_Gen_E_expand}.
\end{proof}

\section{The minimal model and bound currents} \label{S-minimaldetail}

\def\pCurrent{\VJ_{\textup{b}}}
\def\pEfield{\VE}
\def\pPfield{\VP}
\def\pDfield{\VD}
\def\pPermittivityVac{\pPermittivity_\tinyzero}
\def\pSlave{\lambda}

The concept behind our minimal model 
 is that it should mimic the ``ideal''
 of a constant-like permittivity
 with real and imaginary components.
Thus we want there to exist a dielectric polarization $\VP$ that 
 closely follows the current value of the electric field $\VE$, 
 but allows freedom for dynamical variation
 about its chosen target value.
Most simply, 
 we can write
~
\begin{align}
  \pPderivT \pPfield
&=
 -
  \pSlave
  \pPfield
 +
  \chizero 
  \pEfield
,
\label{eqn-trivial-dP}
\end{align}
 where in the steady-state limit
 we have the desired $\pPfield = (\chizero/\pSlave) \pEfield 
 = \Delta \varepsilon \, \pEfield$.
In this model,  
 the dynamics are simply that 
 $\pPfield$ exponentially decays towards to $\Delta \varepsilon \, \pEfield$
 with rate $\pSlave$.

However, 
 from a microscopic point of view, 
 a charge or current is the more useful physical property.
Thus to adapt the initial concept in \eqref{eqn-trivial-dP}
 we use the fact that 
 $\pCurrent = \pPderivT \pPfield$.
However, 
 since substitution of $\pCurrent $ into \eqref{eqn-trivial-dP}
 leaves us with no dynamics, 
 we also apply an extra time derivative to the above equation.
As a result we have 
~
\begin{align}
  \pPderivT \pCurrent
&=
 -
  \pSlave
  \pCurrent
 +
  \pPderivT 
  \left(
    \chizero 
    \pEfield
  \right)
 -
  \left(\pPderivT  \pSlave \right)
  \pPfield
.
\label{eqn-trivial-d2P}
\end{align}
Since this has time derivatives of
 fields
 on both sides, 
 it is not straightforward to interpret it in a causal manner
 \cite{Kinsler-2011ejp}.
Thus
 we combine the $\pCurrent$ and $\pEfield$ fields, 
 both of which are subject to an applied time derivative,
 into a single quantity:
 ~
\begin{align}
  \Kfield
&=
  \pCurrent
 - 
    \chizero 
    \pEfield
.\label{eqn-trivial-Kdef}
\end{align}

Thus we rearrange \eqref{eqn-trivial-d2P}, 
 and substitute for $\pPfield$
 using a rearranged \eqref{eqn-trivial-dP}, 
 to get
~
\begin{align}
  \pPderivT
  \left(
   \pCurrent
   -
    \chizero 
    \pEfield
  \right)
&=
 -
  \pSlave
  \pCurrent
 ~
 -
 ~
  \left(\pPderivT  \pSlave \right)
  \frac{1}{\pSlave}
  \left[
    \chizero 
    \pEfield
   -
    \pPderivT
    \pPfield
  \right]
,
\\
  \pPderivT
  \Kfield
&=
 -
  \pSlave
  \pCurrent
 ~
 +
 ~
  \left(\pPderivT  \pSlave \right)
  \frac{1}{\pSlave}
  \left[
    \pCurrent
   -
    \chizero 
    \pEfield
  \right]
,
\\
  \pPderivT
  \Kfield
&=
 -
  \pSlave
  \left[
    \Kfield
   +
    \chizero 
    \pEfield
  \right]
 ~
 +
 ~
  \frac{\pPderivT \pSlave}{\pSlave}
  \Kfield
.
\label{eqn-trivial-dK}
\end{align}
 where the changes (effects) on the LHS
 from the first RHS term
  are determined solely by the known present values
 of $\Kfield$ and $\pEfield$.
If the system parameter $\pSlave$ is time dependent (i.e. $\pSlave(t)$), 
 then see also that we need
 to know $\pPderivT \pSlave$ as well as $\pSlave$; 
 although such a specification is not required for $\chizero$.
This is particularly relevant in the case
 of an (otherwise constant) medium
 with an abrupt change at a time boundary.

From \eqref{eqn-trivial-dK} we can see that the value of $\Kfield$
 is going to be continually trying to catch up to 
 the present value (albeit scaled) of $\pEfield$, 
 on a timescale set by $\pSlave$.
Clearly, 
 for this model to function as intended
 we will want the $\pSlave$ timescale to be faster than
 the largest significant frequency component of $\pEfield$.

In the harmonic case,
 we can Fourier transform the evolution equation; 
 here a time-domain field $A(t)$ 
  becomes a frequency-domain $\tilde{A}(\omega)$.
Then
 \eqref{eqn-trivial-dK} tells us that 
~
\begin{align}
  -\imath \omega \tilde{\Kfield}
&=
  - \pSlave 
  \left(
    \tilde{\Kfield} + \chizero \tilde{\pEfield}
  \right)
,
\\
\textrm{i.e.} \qquad
  \tilde{\Kfield}
&=
    \frac{- \chizero \pSlave}
         {-\imath\omega + \pSlave}
  \tilde{\pEfield}
\qquad
\textrm{or}
\qquad
  \tilde{\pCurrent}
=
  \chizero
  \left[
    \frac{- \pSlave}
         {\pSlave - \imath\omega}
   +
    1
  \right]
  \tilde{\pEfield}
.
\label{eqn-trivial-harmonic}
\end{align}
In the intended limit where $\pSlave \gg \omega$, 
 we can expand to second order
 so that
~
\begin{align}
  \tilde{\pCurrent}
&\simeq
 -
  \imath\omega
  \chizero \pSlave^{-1}
  \left[
    1
   -
    \imath\omega
    \pSlave^{-1}
   \right]
  \tilde{\pEfield}  
,
\label{eqn-trivial-harmonic-expansion}
\end{align}
and as a result
 we have
~
\begin{align}
  \tilde{\pPfield} 
&=
  \left[
    \Delta \varepsilon_r
   -
    \imath
    \Delta \varepsilon_i (\omega)
  \right]
  \tilde{\pEfield}
,
\qquad
  \textrm{where}\qquad
  \Delta \varepsilon_i (\omega)
=
  \Delta \varepsilon_r \omega / \pSlave
.
\end{align}
This means that this minimal dynamical model
 for the standard ``constant permittivity'' assumption
 which will match the target real-valued permittivity
 in the large $\pSlave$ limit, 
 with the concommitant introduction of a loss
 that gets ever smaller in the ideal large $\pSlave$ limit.

\section{Remarks  --  Causality, dispersion, and the NBA}

\vspace{-3mm} 
Across our time boundary, 
 a change in constant complex CR for a medium 
 is just an instantaneous change, 
 which is straightforwardly causal.
Causal behaviour, 
 however, 
 of itself is not necessarily guaranteed
 to give physically reasonable predictions.
Indeed, 
 all but one of the results in this paper are causal --
 even the unphysical \eqref{Constant_deq_E}, 
 where the future behaviour is by construction
 explicitly dependent on the past behaviour.
The single exception is
 the narrowband result \eqref{Calc_NBA_PDE}, 
 because under this approximation
 questions of causality are moot.

Causality is often tested by applying
 the Kramers-Kronig relations (see e.g. \cite{Kinsler-2011ejp}), 
 but they do not apply to all situations.
For example, 
 even though the result \eqref{D_II_alt} is causal, 
 and clearly so when solving in the time domain,
 as an exponentially increasing function it cannot be Fourier transformed
 so as to allow Kramers-Kronig to be tested.
Indeed, 
 since the original function of Kramers-Kronig
 was to analyse, test, or correct raw data \emph{collected}
 in the frequency domain, 
 using them as a causality test
 when a time domain description is already available is redundant.

This is why the dynamical model \eqref{ABC_DE_P} is a natural starting point
 for an examination of temporal boundaries; 
 although of course more complicated models, 
 such as the summed Lorentzians of \eqref{ABC_epsilon_sum}, 
 or ones involving (causal) integral kernels,
 can be constructed.
Notably, 
 even the highly simplified  \eqref{ABC_DE_P}, 
 designed to give results
 that in the appropriate limit 
 are
 as close as possible to the failed constant complex CR model,
 is sufficient to restore physical behaviour.

\section{Material responses: Cauchy and convolutions}

\vspace{-2mm}
In principle, 
 we could try to expressed this  time boundary situation as a Cauchy problem -- 
 i.e. namely asking what is the subsequent wave behaviour
 if the Cauchy (initial condition) data is $E(z)=E_0\cos(k_a z)$. 
However, 
 it is not clear how this extended field could be set
 up as a realistic initial condition; 
 and, 
 further, 
 trying to use a {straightforward} convolution approach
 would fail because $E(z,t)$ for $t<0$ 
 (i.e. the pre-boundary field)
 would be unknown.

For convolution transforms, 
 we note that their standard use 
 in constitutive relations is of the time-independent form  $D(t)=\int\epsilon(t-t') E(t') dt'$, 
 but unfortunately, 
 this would only be valid -- obviously -- 
 for time independent media. 
For time \emph{dependent} media, 
 such as one including a time boundary, 
 we would need to use the two-time generalisation $D(t)=\int\epsilon(t,t') E(t') dt'$.

Whilst in principle such generalisations can be written down, 
 it is unfortunate that the time-independent convolution
 does not indicate in any way how a general (two-time) formulation
 could be arrived at, 
 and this is complicated further since the convolution 
 \emph{include} the time boundary itself.
This is in stark contrast 
 to dynamic response models such as \eqref{ABC_DE_P}, 
 or the more general form 
 given here in the appendix as \eqref{ABC_epsilon_sum}, 
 where insisting on (e.g.) continuity of the auxiliary fields is quite natural.

\section{Remarks  --  CR and Ohmic losses}

\vspace{-2mm}
Ohmic losses are also covered by our analysis; 
 as they can also be modelled by a complex permittivity. 
Since the polarisation corresponding to ohmic losses is given
 by $\VP=(\sigma/(-i\omega))\VE$ then this is an alternative dispersive
 constitutive relation,
 which means it does not contradict the statements about constant $\epsilonII$.

\end{document}